\DeclarePairedDelimiter{\ceil}{\lceil}{\rceil}
\newtheorem{theorem}{Theorem}
\newtheorem{lemma}{Lemma}
\newtheorem{remark}{Remark}
\newtheorem{definition}{Definition}
\newtheorem{assumption}{Assumption}
\newtheorem{Remark}{Remark}
\DeclareMathOperator{\diag}{diag}
\DeclarePairedDelimiter\floor{\lfloor}{\rfloor}
\newcommand{\distas}[1]{\mathbin{\overset{#1}{\kern\z@\sim}}}%
\newsavebox{\mybox}\newsavebox{\mysim}
\newcommand{\distras}[1]{%
  \savebox{\mybox}{\hbox{\kern3pt$\scriptstyle#1$\kern3pt}}%
  \savebox{\mysim}{\hbox{$\sim$}}%
  \mathbin{\overset{#1}{\kern\z@\resizebox{\wd\mybox}{\ht\mysim}{$\sim$}}}%
}
\title{\LARGE \bf
Learning Linearized Models from Nonlinear Systems with Finite Data

}
\author{~Lei~Xin, George Chiu, Shreyas Sundaram 
\thanks{This work was partially supported by a contract from Saab Inc., under DARPA Award N65236-23-C-8012. Lei Xin and Shreyas Sundaram are with the Elmore Family School of Electrical and Computer Engineering, Purdue University. George Chiu is with the School of Mechanical Engineering, Purdue University. E-mails: {\tt\{lxin, gchiu, sundara2\}@purdue.edu}.
}
}
\begin{document}

\maketitle
\thispagestyle{empty}
\pagestyle{empty}

\begin{abstract}
Identifying a linear system model from data has wide applications in control theory. The existing work on finite sample analysis for linear system identification typically uses data from a single system trajectory under i.i.d  random inputs, and assumes that the underlying dynamics is truly linear. In contrast, we consider the problem of identifying a linearized model when the true underlying dynamics is nonlinear. We provide a multiple trajectories-based deterministic data acquisition algorithm followed by a regularized least squares algorithm, and provide a finite sample error bound on the learned linearized dynamics. Our error bound demonstrates a trade-off between the error due to nonlinearity and the error due to noise, and shows that one can learn the linearized dynamics with arbitrarily small error given sufficiently many samples. We validate our results through experiments, where we also show the potential insufficiency of linear system identification using a single trajectory with i.i.d random inputs, when nonlinearity does exist.
\end{abstract}


\section{Introduction} \label{sec: introduction}
Learning good predictive models from data has wide applications, including in economics and machine learning \cite{athey2018impact, mitchell2007machine}. The problem of system identification is to learn a mathematical model of a dynamical system from data. System identification is an important problem in control theory since a good model can facilitate model-based control design \cite{ljung1999system}. Although physical systems are typically nonlinear, linear models are frequently used in practice due to their simplicity \cite{rugh1996linear}, and their ability to approximate nonlinear systems around a given reference point. Consequently, it is of interest to understand identification of appropriate linear models from data generated by nonlinear systems.

Classically, theories for system identification typically focus on asymptotic aspects \cite{bauer1999consistency,jansson1998consistency}. In recent years, however, finite sample analysis for system identification has been studied extensively. For linear system identification, existing works are either multiple trajectories-based or single trajectory-based. The multiple trajectories setup \cite{dean2019sample, fattahi2018data, zheng2020non, xin2022learning} requires the user to restart the system multiple times, but has a major advantage in its ability to handle unstable systems. In contrast, the single trajectory setup \cite{simchowitz2018learning,oymak2019non,simchowitz2019learning,sarkar2019nonparametric,faradonbeh2018finite,sarkar2019near} performs system identification using data from a single experiment, i.e., the system does not need to reset,  but has potential risks if the system is unstable.  We note that when it comes to linear system identification, almost all existing works that have finite sample guarantees assume that the underlying system is truly linear, except for \cite{sarker2023accurate}. Furthermore, Gaussian random inputs are typically applied to ensure persistent excitation. 

The study on nonlinear system identification is less well-understood, in general, as compared to the case for linear system identification. Recent works on finite sample analysis for nonlinear system identification include \cite{sattar2022non, mania2020active, foster2020learning}. It is worth noting that to obtain finite sample guarantees, the existing works on nonlinear system identification typically require that a certain model structure to be known in advance. However, when the specific model structure is unknown, a reasonable alternative goal is to learn a linearized model from the nonlinear system, due to the well-studied techniques on linear system control as discussed above.

There is a branch of recent research that focuses on learning a linear system representation that completely captures the behaviours of a nonlinear system using the Koopman Operator \cite{mauroy2016linear}. This approach typically requires carefully selected basis functions (e.g., leveraging neural networks \cite{hao2022deep}), and the analysis focuses on the noiseless setting.  In contrast, our focus in this work is to learn a linearized system model, in the sense that the linear model captures the linear part of the nonlinear system after Taylor expansion, and to provide finite sample guarantees when the system has noise.

Most relevant to our work is the recent paper \cite{sarker2023accurate}, which provides a finite sample error bound for learning linear models from systems that have unmodeled dynamics that could capture nonlinearities, using a single system trajectory. However, the method proposed in \cite{sarker2023accurate} assumes the system is ``well-behaved'' by requiring the unmodeled dynamics/nonlinear terms to be (globally) Lipschitz \cite{cobzacs2019lipschitz}. The method also requires the system to satisfy certain additional properties to ensure consistent estimation, supposing the inputs are carefully chosen. In contrast, we show in this work that one can learn a linearized system model from a nonlinear system with arbitrarily small error without the Lipschitzness assumption, given sufficiently many short trajectories, supposing that one has control over the initial conditions of the experiments. 

In summary, our contributions are as follows. 
\begin{itemize}
  \item We provide a deterministic, multiple trajectories-based data acquisition algorithm that ensures persistent excitation under the constraint of being close to the reference point. Using this algorithm followed by a regularized least squares estimation algorithm, we provide a finite sample error bound of the identified linearized dynamics of a nonlinear system.
  \item Our bound shows that one can learn the linearized dynamics with arbitrarily small error, given sufficiently many experiments in the multiple trajectories setup, and demonstrates a trade-off between the error due to noise and the error due to nonlinearity. The bound further characterizes the benefits of using regularization. When the system is perfectly linear, we show a learning rate that matches the existing results on learning perfectly linear systems using random inputs.
  \item We provide numerical experiments to validate our results and insights, and show the potential insufficiency of linear system identification using random inputs from a single trajectory when nonlinearity does exist.
  \end{itemize}
  
Our paper is organized as follows. Section \ref{notation} introduces relevant mathematical notation. Section \ref{sec:algorithm} introduces the system identification problem and the algorithms we use. In Section \ref{analysis}, we present our theoretical results. We present numerical examples in in Section \ref{exp} to validate our results, and conclude in Section \ref{sec: conclusion}. 

\section{Notation} \label{notation}
Vectors are taken to be column vectors unless indicated otherwise. Let $\mathbb{R}$ and $\mathbb{Z}$ denote the set of real numbers and integers, respectively. Let $\lambda_{max}(\cdot)$ and $\lambda_{min}(\cdot)$ be the largest and the smallest eigenvalue in magnitude, respectively, of a given matrix. For a given matrix $A$, we use $A'$ to denote its conjugate transpose. We use $\|A\|$, $\|A\|_{1}$ and $\|A\|_{F}$ to denote the spectral norm, $1$-norm, and Frobenius norm, respectively, of matrix $A$. We use $I_{n}$ to denote the identity matrix with dimension $n$. We use the symbol $\bmod$ to denote the modulo operation. The union of sets is denoted as $\cup$. The open $l_{1}$ ball in $d$-dimensional space with center at $x_{0}$ and radius $r$ is
denoted by $\mathcal{B}_{d}(x_{0},r)\triangleq \{x\in \mathbb{R}^{d}:\|x-x_{0}\|_{1}< r\}$. We denote $e_{i}^{d}$ as a $d$-dimensional vector with the $i$-th component equal to 1 and all other components equal to 0. The symbols $\floor{\cdot}$ and $\ceil{\cdot}$ are used to denote the floor and ceiling functions, respectively. We use $\textbf{0}$ to denote a zero vector with dimension that is clear from the context. The symbol $\sigma(\cdot)$ is used to denote the sigma field generated by the corresponding random vectors. The symbol $\mathcal{S}^{n-1}$ is used to denote the unit sphere in $n$-dimensional space.


\section{Problem Formulation and System Identification Algorithm} \label{sec:algorithm}
Consider the following discrete time nonlinear time invariant system
\begin{equation} \label{nonlinear}
\begin{aligned}
x_{k+1}=f(z_{k})+w_{k},
\end{aligned}
\end{equation}
where $f: \mathbb{R}^{n+p} \to \mathbb{R}^{n}$, $z_{k}=\begin{bmatrix} x_{k}'&u_{k}' \end{bmatrix}^{'}\in \mathbb{R}^{n+p}$, $x_{k}\in\mathbb{R}^{n}$, $u_{k}\in \mathbb{R}^{p}$, and $w_{k}\in \mathbb{R}^{n}$. Here, $x_{k}, u_{k}$ and $w_{k}$ are the state, input, and process noise, respectively. The noise terms $w_{k}$ are  assumed to be independent sub-Gaussian random vectors with parameter $\sigma_{w}^2$, where the definition is given below \cite{rivasplata2012subgaussian}.
\begin{definition}
A real-valued random variable $w$ is called sub-Gaussian with parameter $\sigma^2$ if  we have
\begin{equation*}
\begin{aligned}
&\forall \alpha\in \mathbb{R}, \mathbb{E}[\exp(\alpha{w})]\leq \exp(\frac{\alpha^2 \sigma^2}{2}).\\
\end{aligned}
\end{equation*}
A random vector $x\in \mathbb{R}^n$ is called $\sigma^2$ sub-Gaussian if for all unit vectors $v\in \mathcal{S}^{n-1}$ the random variable $v'x$ is $\sigma^2$ sub-Gaussian.
\end{definition}

Note that sub-Gaussian distributions are commonly used to model noise processes \cite{sarkar2019near}. In particular, every (zero-mean) Gaussian random vector is sub-Gaussian.

Assume that for each component function of $f$, all second order partial derivatives exist and are continuous on $\mathbb{R}^{n+p}$. From Taylor's theorem \cite{courant1965introduction}, system \eqref{nonlinear} using reference point $z_{k}=\textbf{0}$ can be rewritten as
\begin{equation} \label{system}
\begin{aligned} 
x_{k+1}=A x_{k}+Bu_{k}+o+w_{k}+r_{k}, \\
\end{aligned}
\end{equation}
where $A\in \mathbb{R}^{n\times n},B\in \mathbb{R}^{n\times p}$, are system matrices that capture the linear part of $f(z_{k})$, $o=f(\textbf{0})\in \mathbb{R}^{n}$, and $r_{k}=h(z_{k})\in \mathbb{R}^{n}$ is a 
remainder vector that contains higher order terms that are state/input dependent, where $h: \mathbb{R}^{n+p} \to \mathbb{R}^{n}$. The above model is less studied in the literature on finite sample analysis for system identification, and we will consider this model in the sequel. Note that we assume $o$ is possibly non-zero to capture scenarios where the equilibrium points of the system are unknown.  When the system is perfectly linear, we have $o= r_{k}=\textbf{0}$, which is the commonly used model in the literature. In this paper, we assume that both the state $x_{k}$ and input $u_{k}$ can be perfectly measured.
Suppose that we can restart the system multiple times from an arbitrary initial state $x_{0}$ using arbitrary input $u_{0}$, and obtain multiple length 1 trajectories (i.e., state-input pairs obtained by running the system for a single time step, as will be explained next). Using superscript to denote the trajectory index, we denote the set of samples we have as $\{(x^{i}_{1}, x^{i}_{0},u^{i}_{0}):1 \leq i \leq N\}$.  Our goal is to learn the linear approximation system matrices  $\Theta \triangleq \begin{bmatrix}
A&B&o \end{bmatrix}\in \mathbb{R}^{n\times(n+p+1)}$ in system \eqref{system} from the set of samples available to us.

Our result will leverage the following mild assumption on the remainder vector $r_{k}=h(z_{k})$ in system \eqref{system}.
\begin{assumption} \label{ass:remainder}
Let $r_{i,k}$ denote the $i$-th component of $r_{k}$. There exist $c>0$ and $\beta=\beta(c)$ such that $|r_{i,k}|\leq \beta \|z_{k}\|_{1}^2$ for all $i \in \{1,\ldots, n\}$ and all $z_{k}\in \mathcal{B}_{n+p}(\textbf{0},c)$.  
\end{assumption}

\begin{Remark}
The above assumption is, in fact, a direct result of assuming that each component function of the original nonlinear dynamics $f$ has all second order partial derivatives being continuous on $\mathbb{R}^{n+p}$, due to Taylor's theorem for multivariable functions from \cite[Corollary~1]{folland2005higher}. Intuitively, this assumption says that the higher order terms are dominated by the second order terms, if the arguments of the function are sufficiently close to the origin. Note that it does not require the function $h$ to be Lipschitz (which is the assumption used in \cite{sarker2023accurate}). As an example, consider a scalar system with the dynamics given by $f(z_{k})=x_{k}+u_{k}+x_{k}^2+x_{k}^3$. Here $r_{k}=x_{k}^2+x_{k}^3$ satisfies Assumption \ref{ass:remainder} for $c=1$ and $\beta=2$ since $|x_{k}^2+x_{k}^3|\leq |x_{k}^2|+|x_{k}^3|\leq 2|x_{k}|^2\leq 2\|z_{k}\|_{1}^2$ for all $z_{k}\in \mathcal{B}_{2}(\textbf{0},1)$, but the corresponding function $h$ is not Lipschitz. In general, a larger $c$ may lead to a larger $\beta$. 
\end{Remark}

Let $q>0$ be a (small) design parameter that constrains the magnitude of the initial conditions $z_{0}$,  and $N$  be the number of experiments to perform. We deploy a data collection scheme specified in Algorithm \ref{algo1}. 
\begin{algorithm}[H]
\caption{Data Acquisition} \label{algo1}
\textbf{Input} Norm constraint parameter $q>0$, number of experiments $N>0$
\begin{algorithmic}[1]
\State Initialize $s_{1}=1$
\For {$i=1,\ldots, N$}
\If{$i\bmod{(n+p)}\neq 0$}
\State Set $z^{i}_{0}=\left[\begin{smallmatrix} x_{0}^{i'}&u_{0}^{i'} \end{smallmatrix}\right]^{'}=s_{i}\times q e_{i\bmod{(n+p)}}^{n+p}$
\State Collect $x^{i}_{1}$, where $x^{i}_{1}=Ax^{i}_{0}+Bu^{i}_{0}+w^{i}_{0}+o+r^{i}_{0}$
\State Set $s_{i+1}=s_{i}$
\Else
\State Set $z^{i}_{0}=\left[\begin{smallmatrix} x_{0}^{i'}&u_{0}^{i'} \end{smallmatrix}\right]^{'}=s_{i}\times q e_{n+p}^{n+p}$
\State Collect $x^{i}_{1}$, where $x^{i}_{1}=Ax^{i}_{0}+Bu^{i}_{0}+w^{i}_{0}+o+r^{i}_{0}$
\State Set $s_{i+1}=-s_{i}$
\EndIf
\EndFor
\State Output  $\{(x^{i}_{1}, x^{i}_{0}, u^{i}_{0}):1 \leq i \leq N\}$
\end{algorithmic}
\end{algorithm}
\begin{Remark}
Intuitively, we want the initial conditions to stay as close to the reference point (in this case, the origin) as possible, to avoid excessive bias from the higher order terms. Hence, the reason of using of multiple length 1 trajectories is to prevent the noise from driving the system too far from the reference point, and amplifying the effects from $r_{k}$. The key idea of Algorithm \ref{algo1} is to ensure persistent excitation (i.e., the smallest eigenvalue of the sample covariance matrix becomes larger as one gets more data), subject to the constraint on bounded distance to the origin (specified by $q$). Later on in our theoretical result, we will demonstrate how $q$ will affect the finite sample estimation error bound for learning $\Theta$.
\end{Remark}

We establish some definitions now. Define the batch matrices 
\begin{equation} 
\begin{aligned} 
&X=\begin{bmatrix} x^{1}_{1}&x^{2}_{1}&\cdots& x^{N}_{1}\end{bmatrix}\in \mathbb{R}^{n\times N} \\
&W=\begin{bmatrix} w^{1}_{0}&w^{2}_{0}&\cdots& w^{N}_{0}\end{bmatrix}\in \mathbb{R}^{n \times N}\\
&R=\begin{bmatrix} r^{1}_{0}&r^{2}_{0}&\cdots& r^{N}_{0}\end{bmatrix}\in \mathbb{R}^{n \times N}.
\end{aligned}
\end{equation}

Let $\hat{z}^{i}_{0}=\begin{bmatrix} z^{i'}_{0}&1 \end{bmatrix}^{'}\in \mathbb{R}^{n+p+1}$. Define the regressor matrix
\begin{equation} 
\begin{aligned} 
&Z=\begin{bmatrix} \hat{z}^{1}_{0}&\hat{z}^{2}_{0}&\cdots &\hat{z}^{N}_{0}\end{bmatrix}\in \mathbb{R}^{(n+p+1)\times N}.
\end{aligned}
\end{equation}

We have the following relationship
\begin{equation}
\begin{aligned} 
X=\Theta Z+W+R. \\
\end{aligned}
\end{equation}

To learn the linear model $\Theta$, we would like to solve the following regularized least squares problem
\begin{equation*}
\begin{aligned}
  \mathop{\min}_{\tilde{\Theta}\in \mathbb{R}^{n\times (n+p+1)}} \{\|X-\tilde{\Theta}Z\|^{2}_{F}+\lambda \|\tilde{\Theta}\|^2_{F}\},
\end{aligned}
\end{equation*}
where $\lambda\geq 0$ is a regularization parameter. The closed-form solution of the above problem is given by
\begin{equation} 
\begin{aligned}
\hat{\Theta}=XZ^{'}(ZZ'+\lambda I_{n+p+1})^{-1},
\end{aligned}
\end{equation}
under the invertibility assumption \cite{hoerl1970ridge}. The estimation error is then given by
\begin{equation} 
\begin{aligned}
\|\hat{\Theta}-\Theta\|&=\|-\lambda\Theta(ZZ'+\lambda I_{n+p+1})^{-1}\\
&+WZ'(ZZ'+\lambda I_{n+p+1})^{-1}\\
&+RZ'(ZZ'+\lambda I_{n+p+1})^{-1}\| .\label{error}
\end{aligned}
\end{equation}

For the ease of reference, the above steps are encapsulated in Algorithm \ref{algo2}.

\begin{algorithm}[H]
\caption{System Identification Using Multiple Length $1$ Trajectories} \label{algo2}
\textbf{Input} Dataset $\{(x^{i}_{1}, x^{i}_{0}, u^{i}_{0}):1 \leq i \leq N\}$, regularization parameter $\lambda\geq 0$
\begin{algorithmic}[1]  
\State Construct the matrices $X,Z$. Compute $\hat{\Theta}=XZ'(ZZ'+\lambda I_{n+p+1})^{-1}$.
\State Extract the estimated system matrices $A,B,o$ from the estimate $\hat{\Theta}=\begin{bmatrix}\hat{A}&\hat{B}&\hat{o}\end{bmatrix}$.
\end{algorithmic}
\end{algorithm}

In the next section, we will provide a finite sample bound of the system identification error \eqref{error} using Algorithm \ref{algo1} and Algorithm \ref{algo2}. The bound explicitly characterizes how the error depends on $N$, $q$, $\sigma_{w}$, $\lambda$, and other system parameters, and will provide guidance on selecting $q,\lambda$.
 

\section{Theoretical Analysis} \label{analysis}
To upper bound the system identification error in \eqref{error} with high probability, we bound the terms $\|-\lambda\Theta(ZZ'+\lambda I_{n+p+1})^{-1}\|, \|WZ'(ZZ'+\lambda I_{n+p+1})^{-1/2}\|, \|(ZZ'+\lambda I_{n+p+1})^{-1/2}\|, \|RZ'(ZZ'+\lambda I_{n+p+1})^{-1}\|$ separately. We provide some intermediate results first in Section \ref{intermediate}. Our main result is presented in Section \ref{main}.
\subsection{Intermediate results} \label{intermediate}
The following result shows the persistent excitation property of Algorithm \ref{algo1}. Note that the requirement on $N\geq 4(n+p)$ below is mainly used for numerical simplification.
\begin{lemma} 
Suppose that Algorithm \ref{algo1} is used to generate data. Let $N\geq 4(n+p)$. Then we have the following inequalities
\begin{equation*} 
\begin{aligned} 
&\lambda_{min}(ZZ')\geq N \min\{\frac{q^2}{2(n+p)},\frac{1}{2}\},\\
&\lambda_{max}(ZZ')\leq N \max\{\frac{2q^2}{n+p},2\}.
\end{aligned}
\end{equation*}

\label{lemma:PE}
\end{lemma}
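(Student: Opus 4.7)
The plan is to exploit the exact periodic structure of Algorithm~\ref{algo1} and reduce everything to a single ``one period'' matrix. Set $T \triangleq 2(n+p)$ and $M^\star \triangleq \diag(2q^2 I_{n+p},\, T)\in \mathbb{R}^{(n+p+1)\times(n+p+1)}$. The key identity I will establish first is that for every starting offset $m\geq 0$,
\[
\sum_{i=m+1}^{m+T} \hat{z}_{0}^{i}(\hat{z}_{0}^{i})' \;=\; M^\star .
\]
To prove this, I will verify two facts about the algorithm: (i) in any window of $T$ consecutive trajectories, each residue class in $\{1,\ldots,n+p\}$ (with $0$ identified with $n+p$) appears exactly twice as $i \bmod(n+p)$, with the two occurrences differing by exactly $n+p$; and (ii) the sign sequence $s_i$ flips precisely at every multiple of $n+p$, so the two occurrences of the same residue class carry opposite signs. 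Together these give $z_{0}^{i}+z_{0}^{i+n+p}=\mathbf{0}$ and $z_{0}^{i}(z_{0}^{i})'+z_{0}^{i+n+p}(z_{0}^{i+n+p})'=2q^2 e_{j}^{n+p}(e_{j}^{n+p})'$ for the appropriate $j$, which assembles into the block-diagonal $M^\star$ (the off-diagonal blocks vanish and the bottom-right entry sums to $T$).

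Next I will write $N=kT+r$ with $k=\lfloor N/T\rfloor$ and $0\leq r<T$; the hypothesis $N\geq 4(n+p)=2T$ ensures $k\geq 2$. Decompose
\[
ZZ' \;=\; k M^\star + M_{r}, \qquad M_{r} \triangleq \sum_{i=kT+1}^{N}\hat{z}_{0}^{i}(\hat{z}_{0}^{i})' \;\succeq\; 0.
\]
For the lower bound, positivity of $M_{r}$ together with Weyl's inequality gives $\lambda_{min}(ZZ')\geq \lambda_{min}(kM^\star)=k\min\{2q^2,\,T\}$. Combined with $N<(k+1)T\leq 2kT$ (valid since $k\geq 1$), so $2k\geq N/T$, this rearranges into $N\min\{q^2/(2(n+p)),\,1/2\}$, which is exactly the claimed lower bound.

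For the upper bound, the trick is to use the period identity in reverse: $M_{r}$ is the partial sum of the hypothetical $T$-window starting at index $kT+1$, so adding the PSD contributions from the missing $T-r$ trajectories completes it to $M^\star$. Hence $M_{r}\preceq M^\star$, whence $\lambda_{max}(ZZ')\leq (k+1)\lambda_{max}(M^\star)=(k+1)\max\{2q^2,\,T\}$. Using $N\geq kT$ with $k\geq 1$ to deduce $k+1\leq 2k\leq N/(n+p)$ yields $\lambda_{max}(ZZ')\leq N\max\{2q^2/(n+p),\,2\}$, as required.

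The main obstacle is the bookkeeping in the period-sum identity for an \emph{arbitrary} offset $m$: one must track how sign flips can straddle the boundary of a $T$-window and verify that the pairwise cancellation still holds in every case. Once that identity is in place, the rest of the argument is a short application of Weyl's inequality and the elementary two-sided bound $kT\leq N<(k+1)T$ under the hypothesis $k\geq 2$.
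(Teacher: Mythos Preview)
Your approach is correct and is essentially the same as the paper's: both truncate $N$ down to $N_1=\lfloor N/T\rfloor T$ (your $kT$) for the lower bound and pad up to $N_2=\lceil N/T\rceil T$ (your $(k+1)T$) for the upper bound, compute the exact block-diagonal sum over each complete period, and then use $N_1\geq N/2$ and $N_2\leq 2N$ under $N\geq 4(n+p)$. One remark: you only ever invoke the period-sum identity at offsets that are multiples of $T$ (where the sign pattern resets to $s_{jT+1}=1$), so the ``arbitrary offset $m$'' version you flag as the main obstacle is stronger than what your own argument requires---you can spare yourself that bookkeeping.
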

\begin{proof}
To ease the notation, we write $e_{i}^{n+p}$ as $e_{i}$ for $i=1,\ldots,n+p$ in the sequel. We focus on the lower bound first. Denote $N_{1}=\floor{\frac{N}{2(n+p)}}\times {2(n+p)}$. Since the assumption $N\geq 4(n+p)$ implies $N_{1}\geq 4(n+p)$, we have
\begin{equation} 
\begin{aligned} 
ZZ'&=\sum_{i=1}^{N}\hat{z}^{i}_{0}\hat{z}^{i'}_{0}\succeq \sum_{i=1}^{N_{1}}\hat{z}^{i}_{0}\hat{z}^{i'}_{0}\\
&=\left(\sum_{i=1,1+(n+p),1+2(n+p),\ldots}^{N_{1}-(n+p)+1} \begin{bmatrix} s_{i}e_{1}q\\ 1\end{bmatrix}\begin{bmatrix} (s_{i}e_{1}q)' & 1\end{bmatrix}\right)\\
&+\left(\sum_{i=2,2+(n+p), 2+2(n+p),\ldots}^{N_{1}-(n+p)+2} \begin{bmatrix} s_{i}e_{2}q\\ 1\end{bmatrix}\begin{bmatrix} (s_{i}e_{2}q)' & 1\end{bmatrix}\right)\\
&+ \cdots\\
&+\left(\sum_{i=n+p,n+p+(n+p),\ldots}^{N_{1}} \begin{bmatrix} s_{i}e_{n+p}q\\ 1\end{bmatrix}\begin{bmatrix} (s_{i}e_{n+p}q)' & 1\end{bmatrix}\right)\\
&=\begin{bmatrix} M_{1}&M_{2}\\ M_{2}'&M_{3}\end{bmatrix},
\end{aligned}
\end{equation}
where $M_{1}\in \mathbb{R}^{(n+p)\times (n+p)}, M_{2}\in \mathbb{R}^{(n+p)\times 1}$, and $M_{3}\in \mathbb{R}^{1\times 1}$.

For the submatrix $M_{1}$, we have
\begin{equation}  \label{M1}
\begin{aligned} 
M_{1}&=\sum_{j=1}^{n+p}\sum_{i=1}^{\frac{N_{1}}{n+p}}e_{j}e_{j}'q^2=\sum_{j=1}^{n+p}\frac{N_{1}}{n+p}e_{j}e_{j}'q^2\\
&=\diag(\frac{N_{1}}{n+p}q^2,\cdots,\frac{N_{1}}{n+p}q^2),
\end{aligned}
\end{equation}
where we used the property that $s_{i}^2=1$ for all $i$, and the fact that $N_{1}\bmod{2(n+p)}=0$

For the submatrix $M_{2}$, we have
\begin{equation} \label{M2}
\begin{aligned} 
M_{2}&=\left(\sum_{i=1,1+(n+p),1+2(n+p),\ldots}^{N_{1}-(n+p)+1}s_{i}e_{1}q\right)+\cdots\\
&+\left(\sum_{i=n+p,n+p+(n+p),\ldots}^{N_{1}}s_{i}e_{n+p}q\right)\\
&=\textbf{0}+\textbf{0}+\ldots+\textbf{0}=\textbf{0},
\end{aligned}
\end{equation}
where we used the property that $s_{i}=1$ if $i\in \{j(n+p)+1,j(n+p)+2,\ldots, j(n+p)+(n+p)|j \text{ is even}\}$ and $s_{i}=-1$ if $i\in \{j(n+p)+1,j(n+p)+2,\ldots, j(n+p)+(n+p)|j \text{ is odd}\}$, and the fact that $N_{1}\bmod{2(n+p)}=0$, i.e., the number of positive terms is exactly the same as the number of negative terms for each summation.

Lastly, for the scalar matrix $M_{3}$, we have 
\begin{equation} \label{M3}
\begin{aligned} 
M_{3}&=\sum_{i=1}^{N_{1}}1^2=N_{1}.
\end{aligned}
\end{equation}
Combining \eqref{M1}-\eqref{M3}, we have 
\begin{equation} \label{touse1}
\begin{aligned} 
\lambda_{min}(ZZ')\geq \lambda_{min}\left(\begin{bmatrix} M_{1}&M_{2}\\ M_{2}'&M_{3}\end{bmatrix}\right)=\min\{\frac{N_{1}}{n+p}q^2,N_{1}\}. 
\end{aligned}
\end{equation}
Using the property $\floor{\frac{N}{c}}c\geq N-c$ for any $c>0$, we have 
\begin{equation}
\begin{aligned} 
N_{1}=\floor{\frac{N}{2(n+p)}}\times {2(n+p)}\geq N-2(n+p)\geq \frac{N}{2},
\end{aligned}
\end{equation}
where the second inequality is due to our assumption that $N\geq 4(n+p)$.

Hence, the above inequality in conjunction with \eqref{touse1} yields
\begin{equation} 
\begin{aligned} 
\lambda_{min}(ZZ')\geq N \min\{\frac{q^2}{2(n+p)},\frac{1}{2}\}, 
\end{aligned}
\end{equation}
which is of the desired form.

Next, we show the upper bound. Denoting $N_{2}=\ceil{\frac{N}{2(n+p)}}\times{2(n+p)}$, using $N\leq N_{2}$, we have
\begin{equation} 
\begin{aligned} 
ZZ'&=\sum_{i=1}^{N}\hat{z}^{i}_{0}\hat{z}^{i'}_{0}\preceq \sum_{i=1}^{N_{2}}\hat{z}^{i}_{0}\hat{z}^{i'}_{0},
\end{aligned}
\end{equation}
where $\hat{z}^{1}_{0},\hat{z}^{2}_{0},\ldots, \hat{z}^{N_{2}}_{0}$ are generated from Algorithm \ref{algo1} with input parameter $N_{2}$. Since $N_{2}\bmod{2(n+p)}=0$, we can follow a similar procedure as in the proof of the lower bound to obtain 
\begin{equation} 
\begin{aligned} 
\lambda_{max}(ZZ')&\leq \max\{\frac{N_{2}}{n+p}q^2,N_{2}\}\\
&\leq \max\{\frac{N+2(n+p)}{n+p}q^2,N+2(n+p)\}\\
&\leq N\max\{\frac{2q^2}{n+p},2\},
\end{aligned}
\end{equation}
where the second inequality is due to the relationship $N_{2}\leq N+2(n+p)$, and the last inequality is due to the assumption that $N\geq 4(n+p)$.
\end{proof}

We will use the following lemma, which generalizes the upper bound for self-normalized martingales in \cite[Theorem~1]{abbasi2011improved}  to the multi-dimensional case. The proof can be found in \cite[Lemma~5]{xin2023learning}.

\begin{lemma} \label{martingale_bound_multi}
Let $\{\mathcal{F}_{t}\}_{t\geq 0}$ be a filtration. Let $\{{w}_{t}\}_{t\geq 1}$ be a  $\mathbb{R}^{n}$-valued stochastic process such that $w_{t}$ is $\mathcal{F}_{t}$-measurable, and $w_{t}$ is conditionally sub-Gaussian on $\mathcal{F}_{t-1}$ with parameter $R^2$. Let $\{z_{t}\}_{t\geq 1}$ be an $\mathbb{R}^{m}$-valued stochastic process such that $z_{t}$
is $\mathcal{F}_{t-1}$-measurable. Assume that $V$ is a $m\times m$ dimensional positive definite matrix. For all $t\geq 0$, define
\begin{equation*}
\begin{aligned}
&\bar{V}_{t}=V+\sum_{s=1}^{t}z_{s}z_{s}', S_{t}=\sum_{s=1}^{t}z_{s}w_{s}'.\\
\end{aligned}
\end{equation*}
Then, for any $\delta\in (0,1)$, and for all $t\geq0$,
\begin{equation*}
\begin{aligned}
&P(\|\bar{V}_{t}^{-\frac{1}{2}}S_{t}\|\leq\sqrt{\frac{32}{9}R^{2}(\log\frac{9^n}{\delta}+\frac{1}{2}\log\det(\bar{V}_{t}V^{-1})})\\
&\geq 1-\delta.\\
\end{aligned}
\end{equation*}
\end{lemma}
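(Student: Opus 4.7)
The plan is to reduce the matrix-valued self-normalized bound to the scalar-noise version (Theorem 1 of Abbasi-Yadkori et al.) via a standard $\epsilon$-net argument on the noise sphere $\mathcal{S}^{n-1}$. The key observation is that the operator norm we wish to control satisfies $\|\bar{V}_t^{-1/2} S_t\| = \sup_{v \in \mathcal{S}^{n-1}} \|\bar{V}_t^{-1/2} S_t v\|$, and for each fixed unit vector $v$ the product $S_t v = \sum_{s=1}^t z_s (v' w_s)$ is exactly a self-normalized sum with scalar noise $v' w_s$, to which the 1D result directly applies.

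First, I would fix a $1/4$-net $\mathcal{N} \subset \mathcal{S}^{n-1}$ of cardinality at most $9^n$, using the standard covering-number estimate $(1 + 2/\epsilon)^n$ with $\epsilon = 1/4$. The usual net-to-operator-norm inequality then yields
\begin{equation*}
\|\bar{V}_t^{-1/2} S_t\| \leq \frac{1}{1 - 1/4} \max_{v \in \mathcal{N}} \|\bar{V}_t^{-1/2} S_t v\| = \frac{4}{3} \max_{v \in \mathcal{N}} \|\bar{V}_t^{-1/2} S_t v\|.
\end{equation*}

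Next, for each fixed $v \in \mathcal{N}$, observe that the scalar process $v' w_s$ is conditionally $R^2$-sub-Gaussian given $\mathcal{F}_{s-1}$, which follows directly from the paper's definition of a sub-Gaussian random vector applied pathwise to the conditional distribution. Applying Theorem 1 of Abbasi-Yadkori et al. to the scalar-noise sequence $\{v' w_s\}$ with regressors $\{z_s\}$ and confidence parameter $\delta / 9^n$ gives, for each fixed $v$, with probability at least $1 - \delta/9^n$,
\begin{equation*}
\|\bar{V}_t^{-1/2} S_t v\|^2 \leq 2 R^2 \left( \log(9^n/\delta) + \tfrac{1}{2} \log \det(\bar{V}_t V^{-1}) \right).
\end{equation*}
A union bound over the $|\mathcal{N}| \leq 9^n$ net points preserves this inequality simultaneously for all $v \in \mathcal{N}$ with probability at least $1 - \delta$. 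Squaring the net inequality contributes an extra factor $(4/3)^2 = 16/9$, and the product $(16/9) \cdot 2 = 32/9$ matches the constant in the lemma; taking square roots yields the claim.

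The main technical subtlety is really just bookkeeping the constant so that $32/9$ emerges cleanly and verifying carefully that the scalar projections inherit the correct sub-Gaussianity parameter under conditioning; I do not expect a substantive obstacle beyond invoking the scalar theorem and the standard $\epsilon$-net machinery. An alternative route would be to reprove the method-of-mixtures argument directly in the matrix-valued setting with a prior distribution on $\mathcal{S}^{n-1}$, but the net reduction is cleaner and leverages the existing 1D bound as a black box.
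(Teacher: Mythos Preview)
Your proposal is correct. The paper does not actually prove this lemma; it simply cites \cite[Lemma~5]{xin2023learning}. Your $\epsilon$-net reduction to the scalar self-normalized bound of Abbasi-Yadkori et al.\ is the standard route, and the fact that your constants ($9^n$ from a $1/4$-net, the factor $32/9 = (4/3)^2 \cdot 2$) line up exactly with the statement makes it virtually certain that the cited proof proceeds identically. One small point worth making explicit in a write-up: the scalar theorem is an \emph{anytime} bound (holding simultaneously for all $t$), and since your net $\mathcal{N}$ is fixed independently of $t$, the union bound over $\mathcal{N}$ preserves this uniformity, matching the ``for all $t\geq 0$'' in the lemma.
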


We have the following result that upper bounds the contribution from noise.
\begin{lemma} \label{bound noise}
Suppose that Algorithm \ref{algo1} is used to generate data. Let $N\geq 4(n+p)$ and $q\leq \sqrt{n+p}$. Then for any fixed $\delta \in (0,1)$,  we have with probability at least $1-\delta$
\begin{equation*}
\begin{aligned}
&\|WZ'(ZZ'+\lambda I_{n+p+1})^{-1/2}\|\\
&\leq 3 \sigma_{w} \sqrt{\log\frac{9^n}{\delta}+(n+p+1)\log(1+\frac{4(n+p)}{q^2+\zeta})},
\end{aligned}
\end{equation*}
where $\zeta=\frac{4\lambda(n+p)}{N}$.
\end{lemma}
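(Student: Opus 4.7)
The plan is to apply the self-normalized martingale bound of Lemma~\ref{martingale_bound_multi} via a multiplicative decomposition of the target operator norm, using an auxiliary positive-definite matrix chosen to match the structure of the claim. Since $\|WZ'(ZZ'+\lambda I)^{-1/2}\| = \|(ZZ'+\lambda I)^{-1/2} ZW'\|$, submultiplicativity gives, for any positive-definite $V$,
$$\|(ZZ'+\lambda I)^{-1/2} ZW'\| \leq \bigl\|(ZZ'+\lambda I)^{-1/2}(V+ZZ')^{1/2}\bigr\| \cdot \bigl\|(V+ZZ')^{-1/2} ZW'\bigr\|,$$
and I would apply Lemma~\ref{martingale_bound_multi} only to the second factor. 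A direct application with the obvious choice $V = \lambda I$ would make the $\log\det((ZZ'+\lambda I)/\lambda I)$ term diverge as $\lambda \to 0$, while the target bound remains finite in that limit, so the decomposition is essential.

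For Lemma~\ref{martingale_bound_multi}, I would take the filtration $\mathcal{F}_t = \sigma(w^1_0,\ldots,w^t_0)$, the process $z_t = \hat z^t_0$ (deterministic under Algorithm~\ref{algo1}, hence trivially $\mathcal{F}_{t-1}$-measurable), and $w_t = w^t_0$ (conditionally sub-Gaussian with parameter $\sigma_w^2$). The key choice is $V = \lambda I + \alpha ZZ'$ with $\alpha = q^2/(4(n+p))$, making $V+ZZ' = \lambda I + (1+\alpha)ZZ'$. For the first factor in the decomposition, the eigenvalues of $(\lambda I + ZZ')^{-1/2}(V+ZZ')(\lambda I + ZZ')^{-1/2}$ are $1 + \alpha \nu_i/(\lambda+\nu_i) \leq 1+\alpha$ (where $\nu_i$ are the eigenvalues of $ZZ'$), so this factor is bounded by $\sqrt{1+\alpha}\leq \sqrt{5/4}$ under $q\leq\sqrt{n+p}$. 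For the determinant term, $\log\det((V+ZZ')V^{-1}) = \sum_i \log(1 + \nu_i/(\lambda + \alpha \nu_i))$; the summand is increasing in $\nu_i$, so I would apply the upper bound $\nu_i \leq \lambda_{\max}(ZZ') \leq 2N$ from Lemma~\ref{lemma:PE} (valid because $q\leq\sqrt{n+p}$). After substituting $\alpha$ and using $\log(1+y)\leq (y/z)\log(1+z)$ for $y \geq z$ to reconcile the factor-of-two discrepancy between $2\lambda(n+p)/N$ produced by $2\alpha N$ and the target $\zeta = 4\lambda(n+p)/N$, this yields a $\log\det$ bound of the form $(n+p+1)\log(1 + 4(n+p)/(q^2 + \zeta))$ up to constants.

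Combining, the product is at most $\sqrt{(5/4)(32/9)}\,\sigma_w \leq 3\sigma_w$ times the square root of $\log(9^n/\delta) + (n+p+1)\log(1+4(n+p)/(q^2+\zeta))$, giving the stated bound. The main obstacle is calibrating $\alpha$: it must be large enough that $\alpha\lambda_{\max}(ZZ')$ generates the $\zeta = 4\lambda(n+p)/N$ correction in the denominator of the target, yet small enough that the first factor $\sqrt{1+\alpha}$ stays within the numerical constant budget. The specific value $\alpha = q^2/(4(n+p))$ threads this needle, with the assumption $q\leq\sqrt{n+p}$ ensuring $1+\alpha\leq 5/4$ so that everything fits.
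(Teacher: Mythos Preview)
Your argument is correct, but it follows a genuinely different route from the paper's proof. Both proofs pass through the same self-normalized bound (Lemma~\ref{martingale_bound_multi}) applied to $\|(V+ZZ')^{-1/2}ZW'\|$ for a suitable positive-definite $V$, and both invoke the upper bound $\lambda_{\max}(ZZ')\le 2N$ from Lemma~\ref{lemma:PE} to control the determinant. The difference is in the choice of $V$ and in how the first multiplicative factor is handled. The paper takes $V$ to be a \emph{scalar} multiple of the identity, namely $V=(2\lambda+\tfrac{Nq^2}{2(n+p)})I$; it justifies the comparison $\|\bar V_N^{-1/2}ZW'\|\le\sqrt{2}\,\|(V+ZZ')^{-1/2}ZW'\|$ via the \emph{lower} eigenvalue bound of Lemma~\ref{lemma:PE}, and the determinant then produces exactly $4(n+p)/(q^2+\zeta)$ with no further adjustment. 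Your choice $V=\lambda I+\alpha ZZ'$ with $\alpha=q^2/(4(n+p))$ avoids the lower eigenvalue bound altogether---the factor $\sqrt{1+\alpha}$ drops out of a direct eigenvalue computation---which is conceptually cleaner and would extend to designs where persistent excitation is not known a priori. The price is that your determinant calculation lands on $4(n+p)/(q^2+\zeta/2)$, one factor of two off, which you then repair with the concavity inequality $\log(1+y)\le(y/z)\log(1+z)$.

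One bookkeeping point: your final ``Combining'' sentence quotes the constant $\sqrt{(5/4)(32/9)}$ but does not absorb the extra factor $y/z\le 2$ coming from that repair step. Since that factor multiplies only the $(n+p+1)\log(\cdot)$ term, you actually need $\sqrt{2\cdot(5/4)\cdot(32/9)}=\sqrt{80/9}\approx 2.98<3$, which still fits under the stated constant $3$ --- but only just. The paper's version lands on $\sqrt{2\cdot 32/9}=8/3$ with no such juggling, so its constants are less delicate.
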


\begin{proof}
Denoting $\bar{V}_{N}=\lambda I_{n+p+1}+ZZ'$, we have
\begin{equation*}
\begin{aligned}
&\|WZ'(ZZ'+\lambda I_{n+p+1})^{-1/2}\|=\|\bar{V}_{N}^{-1/2}ZW'\|.
\end{aligned}
\end{equation*}
Let $\hat{V}_{N}=(\lambda+\frac{Nq^2}{2(n+p)})I_{n+p+1}$. When $N\geq 4(n+p)$ and $q\leq \sqrt{n+p}$, we can apply Lemma \ref{lemma:PE} to get $\bar{V}_{N}\succeq \hat{V}_{N}$. Since $\bar{V}_{N} \succeq \hat{V}_{N} \Rightarrow  2\bar{V}_{N} \succeq \bar{V}_{N}+\hat{V}_{N}\Rightarrow \bar{V}_{N}^{-1} \preceq 2 (\bar{V}_{N}+\hat{V}_{N})^{-1}$, we can write 
\begin{equation*}
\begin{aligned}
&\|\bar{V}_{N}^{-1/2}ZW'\|\leq \sqrt{2}\|(\bar{V}_{N}+\hat{V}_{N})^{-1/2}ZW'\|\\
&=\sqrt{2}\|(\hat{V}_{N}+\lambda I_{n+p+1})+\sum_{i=1}^{N}\hat{z}^{i}_{0}\hat{z}^{i'}_{0})^{-1/2}(\sum_{i=1}^{N}\hat{z}^{i}_{0}w^{i'}_{0})\|,
\end{aligned}
\end{equation*}
where the inequality is due to \cite[Lemma~10]{xin2023learning}.

Denote $V=\hat{V}_{N}+\lambda I_{n+p+1}$. Define the filtration $\{\mathcal{F}_{t}\}_{t\geq 0}$, where $\mathcal{F}_{t}=\sigma(\{\hat{z}_{0}^{i+1}\}_{i=0}^{t}\cup\{w_{0}^{j}\}_{j=1}^{t})$. Since the sequence of $\hat{z}^{i}_{0}$ generated by Algorithm \ref{algo1} is deterministic, and the noise terms are independent, for any fixed $\delta \in (0,1)$, we can apply Lemma \ref{martingale_bound_multi} to obtain with probability at least $1-\delta$
\begin{equation*}
\begin{aligned}
&\sqrt{2}\|(\bar{V}_{N}+\hat{V}_{N})^{-1/2}ZW'\|\\
&\leq 3 \sigma_{w} \sqrt{\log\frac{9^n}{\delta}+\frac{1}{2}\log\det((V+ZZ')V^{-1})}.
\end{aligned}
\end{equation*}
When $q\leq \sqrt{n+p}$, we can apply the upper bound in Lemma \ref{lemma:PE} to obtain
\begin{equation*}
\begin{aligned}
det((V+ZZ')V^{-1})&=\frac{det(V+ZZ')}{det(V)}\\
&\leq \frac{(2\lambda+\frac{Nq^2}{2(n+p)}+\|ZZ'\|)^{n+p+1}}{(2\lambda+\frac{Nq^2}{2(n+p)})^{n+p+1}}\\
&\leq (1+\frac{2N}{2\lambda+\frac{Nq^2}{2(n+p)}})^{n+p+1}\\
&= (1+\frac{4(n+p)}{q^2+\zeta})^{n+p+1},
\end{aligned}
\end{equation*}
where we used the fact that the determinant is the product of eigenvalues. The result then follows.
\end{proof}

Next, we bound the contribution from the higher order terms.
\begin{lemma} \label{bound nonlinearity}
Suppose that Algorithm \ref{algo1} is used to generate data.  Let $N\geq 4(n+p)$ and $q\leq \sqrt{n+p}$. Fix constants $c$ and $\beta$ that satisfy Assumption \ref{ass:remainder}, and denote $\gamma=\frac{\lambda(n+p)}{Nq^2}$. Then if $q< c$, we have 
\begin{equation} 
\begin{aligned}
&\|RZ'(ZZ'+\lambda I_{n+p+1})^{-1}\|\\
&\leq \sqrt{\frac{2\beta^2(n^2+np)}{1+\gamma}}q+\frac{2 (n+p)\sqrt{\lambda Nn\beta^{2} q^4}}{Nq^2+2\lambda (n+p)}.
\end{aligned}
\end{equation}
\end{lemma}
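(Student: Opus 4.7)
The plan is to convert Assumption \ref{ass:remainder} into a uniform entrywise bound on $R$, split $RZ'(ZZ'+\lambda I_{n+p+1})^{-1}$ using a resolvent identity, and then bound each piece using Lemma \ref{lemma:PE}.

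First, I would observe that by construction of Algorithm \ref{algo1}, every initial condition is of the form $z_0^i = s_i q\, e_j^{n+p}$ for some $j$, so $\|z_0^i\|_1 = q$. Since $q < c$, Assumption \ref{ass:remainder} applies componentwise and yields $|r_{i,0}^k| \leq \beta q^2$ for every entry. Summing over the $n$ rows and $N$ columns gives the deterministic Frobenius bound $\|R\|_F^2 \leq nN\beta^2 q^4$. This is the only bound on the remainder matrix I need; it requires neither Lipschitzness nor any probabilistic argument.

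Next, since Lemma \ref{lemma:PE} guarantees $ZZ' \succ 0$ under the standing assumptions, I would apply the resolvent identity
\begin{equation*}
(ZZ'+\lambda I_{n+p+1})^{-1} = (ZZ')^{-1} - \lambda (ZZ')^{-1}(ZZ'+\lambda I_{n+p+1})^{-1}
\end{equation*}
to write
\begin{equation*}
RZ'(ZZ'+\lambda I_{n+p+1})^{-1} = RZ'(ZZ')^{-1} - \lambda RZ'(ZZ')^{-1}(ZZ'+\lambda I_{n+p+1})^{-1},
\end{equation*}
and then invoke the triangle inequality to split the estimate into the two summands that appear in the claim.

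For the first summand $\|RZ'(ZZ')^{-1}\|$, I would pass to the Frobenius norm via $\|A\| \leq \|A\|_F$ and $\|AB\|_F \leq \|A\|_F\, \|B\|$, combine with the SVD-based identity $\|Z'(ZZ')^{-1}\| = 1/\sqrt{\lambda_{\min}(ZZ')}$, and substitute Lemma \ref{lemma:PE}'s lower bound $\lambda_{\min}(ZZ') \geq Nq^2/(2(n+p))$; a small refinement using $\lambda_{\min}(ZZ') + \lambda \geq (Nq^2 + \lambda(n+p))/(2(n+p))$ produces the $(1+\gamma)$ denominator and gives the first summand. For the second summand, I would bound submultiplicatively, controlling $\|Z'\|$ by $\sqrt{\lambda_{\max}(ZZ')} \leq \sqrt{2N}$ from Lemma \ref{lemma:PE}, $\|(ZZ')^{-1}\|$ by $2(n+p)/(Nq^2)$, and $\|(ZZ'+\lambda I_{n+p+1})^{-1}\|$ by $2(n+p)/(Nq^2+2\lambda(n+p))$; combined with $\|R\|_F \leq \sqrt{nN}\,\beta q^2$, these factors assemble into the $2(n+p)\sqrt{\lambda N n\beta^2 q^4}/(Nq^2+2\lambda(n+p))$ form.

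The main obstacle will be grouping the factors in the second summand so that the $\sqrt{\lambda N n}$ scaling in the numerator emerges cleanly: a naive split $\lambda \|RZ'(ZZ')^{-1}\|\cdot\|(ZZ'+\lambda I_{n+p+1})^{-1}\|$ gives a valid but looser bound with the wrong $q$-exponent, so I would need to attach the $\sqrt{\lambda}$ factor to $\|Z\|$ (producing $\sqrt{\lambda\lambda_{\max}(ZZ')} \lesssim \sqrt{\lambda N}$) rather than distribute it across the whole product. Everything else is routine substitution of the bounds from Lemma \ref{lemma:PE}.
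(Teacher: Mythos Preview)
Your bound on $\|R\|_F$ is fine and matches the paper, but the resolvent splitting you propose cannot produce the stated inequality. After you write
\[
RZ'(ZZ'+\lambda I)^{-1}=RZ'(ZZ')^{-1}-\lambda\,RZ'(ZZ')^{-1}(ZZ'+\lambda I)^{-1},
\]
the first summand $RZ'(ZZ')^{-1}$ contains no $\lambda$ whatsoever, so no ``refinement using $\lambda_{\min}(ZZ')+\lambda$'' can insert the $(1+\gamma)^{-1/2}$ factor: the quantity $\lambda_{\min}(ZZ')+\lambda$ simply does not appear in that term. The best you can get there is $\sqrt{2n(n+p)}\,\beta q$, which is the claim's first term only at $\gamma=0$ and strictly larger for any $\lambda>0$. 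Likewise, the second summand carries a full factor of $\lambda$, not $\sqrt{\lambda}$; your proposed trick of ``attaching $\sqrt{\lambda}$ to $\|Z\|$'' does not change that, and a submultiplicative bound yields something of order $\lambda(n+p)^2/(Nq^2+2\lambda(n+p))$, not the claimed $\sqrt{\lambda N n}\,\beta q^2(n+p)/(Nq^2+2\lambda(n+p))$. So the decomposition gives a valid but different (and, for large $\lambda$, strictly worse) bound than the one in the lemma.

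The paper avoids this by not splitting the resolvent at all. It bounds $\|RZ'(ZZ'+\lambda I)^{-1}\|\le\|R\|\,\|Z'(ZZ'+\lambda I)^{-1}\|$, then controls the second factor via
\[
\|Z'(ZZ'+\lambda I)^{-1}\|^2=\|(ZZ'+\lambda I)^{-1}ZZ'(ZZ'+\lambda I)^{-1}\|,
\]
writes $ZZ'=(ZZ'+\lambda I)-\lambda I$ inside, and uses $\sqrt{a+b}\le\sqrt{a}+\sqrt{b}$ after bounding by $\|(ZZ'+\lambda I)^{-1}\|+\lambda\|(ZZ'+\lambda I)^{-1}\|^2$. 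Because the square root is taken last, this is what generates the $\sqrt{\lambda}$ scaling in the second term and keeps $\lambda$ in the denominator of the first term, yielding exactly the $(1+\gamma)$ and $\sqrt{\lambda Nn\beta^2 q^4}$ forms. If you want to recover the statement as written, you should follow this route rather than the resolvent identity.
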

\begin{proof}
Note that 
\begin{equation} 
\begin{aligned}
\|RZ'(ZZ'+\lambda I_{n+p+1})^{-1}\|\leq \|R\|\|Z'(ZZ'+\lambda I_{n+p+1})^{-1}\|.
\end{aligned}
\end{equation}
For the term $\|R\|$, using $R_{i,j}$ to denote its $(i,j)$ entry, we have 
\begin{equation} \label{touse2}
\begin{aligned}
\|R\|\leq \|R\|_{F}=\sqrt{\sum_{i=1}^{n}\sum_{j=1}^{N}R_{i,j}^2}\leq \sqrt{Nn\beta^2 q^4},
\end{aligned}
\end{equation}
where the second inequality is due to the fact that $\|z^{i}_{0}\|_{1}=q$ for all $i=1,\ldots,N$, the assumption that $q< c$, and Assumption \ref{ass:remainder}.

For the term $\|Z'(ZZ'+\lambda I_{n+p+1})^{-1}\|$, we have
\begin{equation*} 
\begin{aligned}
&\|Z'(ZZ'+\lambda I_{n+p+1})^{-1}\|\\
&=\sqrt{\|(ZZ'+\lambda I_{n+p+1})^{-1}ZZ'(ZZ'+\lambda I_{n+p+1})^{-1}\|}.
\end{aligned}
\end{equation*}
Note that 
\begin{equation} \label{touse3}
\begin{aligned}
&\|(ZZ'+\lambda I_{n+p+1})^{-1}ZZ'(ZZ'+\lambda I_{n+p+1})^{-1}\|=\\
&\|(ZZ'+\lambda I_{n+p+1})^{-1}(ZZ'+\lambda I_{n+p+1})(ZZ'+\lambda I_{n+p+1})^{-1}\\
&\quad -\lambda(ZZ'+\lambda I_{n+p+1})^{-1}(ZZ'+\lambda I_{n+p+1})^{-1}\|\\
&\leq \|(ZZ'+\lambda I_{n+p+1})^{-1}\|+\lambda\|(ZZ'+\lambda I_{n+p+1})^{-1}\|^2.
\end{aligned}
\end{equation}
From Weyl's inequality \cite{horn2012matrix}, we have
\begin{equation*} 
\begin{aligned}
\|(ZZ'+\lambda I_{n+p+1})^{-1}\|&=\frac{1}{\lambda_{min}(ZZ'+\lambda I_{n+p+1})}\\
&\leq \frac{1}{\lambda_{min}(ZZ')+\lambda}.
\end{aligned}
\end{equation*}
Using the above inequality and \eqref{touse3}, since $N\geq 4(n+p)$ and $q\leq \sqrt{n+p}$, we can apply Lemma \ref{lemma:PE} to get
\begin{equation*} 
\begin{aligned}
&\|Z'(ZZ'+\lambda I_{n+p+1})^{-1}\|\\
&\leq \sqrt{\frac{2(n+p)}{Nq^2+2\lambda (n+p)}}+\frac{2\sqrt{\lambda} (n+p)}{Nq^2+2\lambda (n+p)},
\end{aligned}
\end{equation*}
where we used the relationship that $\sqrt{a+b}\leq \sqrt{a}+\sqrt{b}$ for $a,b\geq 0$.

Finally, combining the above inequality with \eqref{touse2}, and after some algebraic manipulations, we have the desired result.
\end{proof}

\subsection{Main Result} \label{main}
Now we present our main result, a finite sample upper bound of the system identification error \eqref{error}.
\begin{theorem} \label{thm1}
Suppose that Algorithm \ref{algo1} is used to generate data.  Let $N\geq 4(n+p)$ and $q\leq \sqrt{n+p}$. Fix constants $c$, $\beta$ that satisfy Assumption \ref{ass:remainder}, and a confidence parameter $\delta \in (0,1)$. Then if $q< c$, with probability at least $1-\delta$, the estimation error of Algorithm \ref{algo2} satisfies
\begin{equation} \label{thm1_b}
\begin{aligned}
\|\hat{\Theta}-\Theta\|&\leq \underbrace{\frac{5 \sigma_{w} \sqrt{\log\frac{9^n}{\delta}+(n+p+1)\log(1+\frac{4(n+p)}{q^2})}}{\sqrt{Nq^2/(n+p)+\lambda}}}_\text{Error due to noise}\\
&+\underbrace{\sqrt{\frac{2(n^2+np)}{1+\gamma}}\beta q}_\text{Error due to nonlinearity}\\
&+\underbrace{\frac{2(n+p)(\lambda\|\Theta\|+\sqrt{\lambda Nn\beta^2 q^4})}{2\lambda(n+p)+Nq^2}}_\text{Error due to regularization},
\end{aligned}
\end{equation}
where $\gamma=\frac{\lambda(n+p)}{Nq^2}$.
\end{theorem}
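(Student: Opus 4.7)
The plan is to begin from the triangle inequality on the decomposition already given in \eqref{error}, namely
\begin{equation*}
\|\hat\Theta-\Theta\|\le \underbrace{\|\lambda\Theta(ZZ'+\lambda I_{n+p+1})^{-1}\|}_{T_1}+\underbrace{\|WZ'(ZZ'+\lambda I_{n+p+1})^{-1}\|}_{T_2}+\underbrace{\|RZ'(ZZ'+\lambda I_{n+p+1})^{-1}\|}_{T_3},
\end{equation*}
and then handle the three terms separately with the intermediate lemmas from Section~\ref{intermediate}. Since the $\hat z_0^i$ generated by Algorithm~\ref{algo1} are deterministic, all of $T_1$ and $T_3$ will be bounded deterministically, and only $T_2$ contributes a probabilistic event.

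For $T_1$, I would use submultiplicativity to pull out $\lambda\|\Theta\|$ and then bound $\|(ZZ'+\lambda I_{n+p+1})^{-1}\|$ by Weyl's inequality together with the lower bound in Lemma~\ref{lemma:PE}; because $q\le\sqrt{n+p}$ forces the $\min$ in that lemma to be $q^2/(2(n+p))$, this yields $T_1\le 2(n+p)\lambda\|\Theta\|/(2\lambda(n+p)+Nq^2)$, which is exactly the first piece of the ``error due to regularization'' term in \eqref{thm1_b}.

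For $T_2$, I would split the product as $\|WZ'(ZZ'+\lambda I_{n+p+1})^{-1/2}\|\cdot\|(ZZ'+\lambda I_{n+p+1})^{-1/2}\|$. Lemma~\ref{bound noise} controls the first factor with probability at least $1-\delta$, and Lemma~\ref{lemma:PE} together with Weyl's inequality controls the second factor by $1/\sqrt{\lambda+Nq^2/(2(n+p))}$. The two remaining simplifications are to drop $\zeta\ge 0$ inside the logarithm (giving the cleaner $\log(1+4(n+p)/q^2)$ form) and to absorb the resulting $\sqrt{2}$ factor coming from the $2(n+p)$ in the denominator into the leading constant to obtain the displayed ``error due to noise'' expression.

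For $T_3$, I would invoke Lemma~\ref{bound nonlinearity} verbatim, which directly produces two summands: the first, $\sqrt{2\beta^2(n^2+np)/(1+\gamma)}\,q$, is the stated ``error due to nonlinearity''; the second, $2(n+p)\sqrt{\lambda N n\beta^2 q^4}/(Nq^2+2\lambda(n+p))$, is combined with the $T_1$ bound (they share the denominator $2\lambda(n+p)+Nq^2$) to yield the full ``error due to regularization'' term. Putting the three contributions together and applying a union bound (trivial here since only one probabilistic event appears) completes the argument. The main obstacle is bookkeeping rather than any genuinely new estimate: one must be careful to use the constraint $q\le\sqrt{n+p}$ at exactly the points where it is needed so that the $\min/\max$ in Lemma~\ref{lemma:PE} collapse to the $q$-dependent branches, and to track the constants tightly enough to arrive at the constant $5$ in the noise term after absorbing the $\sqrt{2}$ factor.
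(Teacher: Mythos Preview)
Your proposal is correct and follows essentially the same approach as the paper: the paper also applies the triangle inequality to \eqref{error}, bounds $T_1$ via submultiplicativity and Weyl's inequality, splits $T_2$ into the product $\|WZ'(ZZ'+\lambda I)^{-1/2}\|\cdot\|(ZZ'+\lambda I)^{-1/2}\|$ and invokes Lemma~\ref{bound noise} and Lemma~\ref{lemma:PE}, and uses Lemma~\ref{bound nonlinearity} for $T_3$, with the final constants obtained exactly as you describe. Your remark that the constant $5$ arises from absorbing the $\sqrt{2}$ (since $\lambda+Nq^2/(2(n+p))\ge \tfrac12(\lambda+Nq^2/(n+p))$ gives $3\sqrt{2}<5$) is precisely the ``algebraic manipulation'' the paper leaves implicit.
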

\begin{proof}
Recall the estimation error in \eqref{error}. We have
\begin{equation} 
\begin{aligned}
\|\hat{\Theta}-\Theta\|&\leq\lambda\|\Theta\|\|(ZZ'+\lambda I_{n+p+1})^{-1}\|\\
&+\|RZ'(ZZ'+\lambda I_{n+p+1})^{-1}\|\\
&+\|WZ'(ZZ'+\lambda I_{n+p+1})^{-1/2}\| \times\\
&\|(ZZ'+\lambda I_{n+p+1})^{-1/2}\|.\\
\end{aligned}
\end{equation}
Noting that 
\begin{equation} 
\begin{aligned}
\|(ZZ'+\lambda I_{n+p+1})^{-1/2}\|&=\frac{1}{\sqrt{\lambda_{min}(ZZ'+\lambda I_{n+p+1})}}\\
&\leq \frac{1}{\sqrt{\lambda_{min}(ZZ')+\lambda}},
\end{aligned}
\end{equation}
from Weyl's inequality \cite{horn2012matrix}, the result directly follows from applying Lemma \ref{lemma:PE}, Lemma \ref{bound noise}, and Lemma \ref{bound nonlinearity} after some algebraic manipulations.
\end{proof}
\begin{remark} \textbf{Interpretation of Theorem \ref{thm1}}.
Note that Theorem \ref{thm1} holds irrespective of the spectral radius of the system matrix $A$, which captures a well known advantage of the multiple trajectories setup. Below we discuss other key insights provided by Theorem \ref{thm1}.

\textbf{Trade-off between error due to noise and error due to nonlinearity:}
Suppose that $\lambda=0$ for now. When the system is perfectly linear, one has $\beta=0$. Consequently, the upper bound in Theorem \ref{thm1} only contains the error due to noise, which goes to zero with a rate of $\mathcal{O}(\frac{1}{\sqrt{N}})$. This implies a consistent estimator of which the convergence rate matches the results in the existing literature for learning perfectly linear system using random inputs \cite{dean2019sample,sarkar2019near}. When there does exist nonlinearity, i.e., $\beta >0$, one can observe that the error due to nonlinearity can be made arbitrarily small by choosing a smaller $q$ used in Algorithm \ref{algo1} (recall that $q$ captures the magnitude of the initial conditions). On the other hand, a smaller $q$ would also make the denominator of the term capturing error due to noise small, thus leading to a larger error due to noise. In other words, if one starts close enough to the reference point (by setting $q$ to be small), one would have less bias due to nonlinearity, at the cost of having a smaller signal to noise ratio (thus a larger error due to noise). However, the error due to noise can always be made almost zero by increasing the number of experiments $N$. Consequently, if one can afford to generate a large amount of data, it is preferable to use a small $q$ due to the low bias introduced by the nonlinear terms, and the small error introduced by the noise (which is due to the large amount of samples). These insights are different from system identification for truly linear systems, where it is commonly believed that a larger signal to noise ratio is always better. We will also illustrate these ideas empirically in Section \ref{exp}.

\textbf{Role of regularization:}
Suppose that $N, q$ are fixed. As $\lambda$ becomes larger, we can observe that both the error due to noise and the error due to nonlinearity goes to zero, and the error due to regularization converges to $\|\Theta\|$. This result implies that setting $\lambda$ to be relatively large can be helpful if $\sigma_{w}$ is large (system is very noisy) or $\beta$ is large (system has strong nonlinearity), while $\|\Theta\|$ is small. However, the optimal $\lambda$ can be hard to obtain if (some upper bounds of) the parameters in \eqref{thm1_b} are unknown in advance. In practice, cross validation techniques \cite{refaeilzadeh2009cross} are commonly used to select a good value of $\lambda$.
 \end{remark}

\section{Numerical Examples} \label{exp}
In this section, we provide simulated numerical examples to validate the insights for system identification using Algorithm \ref{algo1} and Algorithm \ref{algo2}. We also compare the results against the single trajectory setup, where the input is set to be independent zero mean Gaussian, with slight adjustments to deal with the offset $o$ in our setup \eqref{system}, i.e., by appending ones in the regressor matrix. More specifically, we still use Algorithm \ref{algo2} in the single trajectory setup, but the dataset is generated without restarting the system, see \cite{sarkar2019near, ye2021sample} for examples. Such comparisons are made since Gaussian inputs are commonly used in the literature on linear system identification \cite{dean2019sample,oymak2019non}. For simplicity, we set $\lambda=0$ for all experiments. All results are averaged over 10 independent experiments.
\subsection{System with mild nonlinearity}
In the first example, we investigate the performance of the system identification algorithms under mild nonlinearity. The model we use here captures the dynamics of a nonlinear pendulum.\footnote{https://courses.engr.illinois.edu/ece486/fa2019/handbook/lec02.html} The system states are the pendulum angle and its velocity, and the input is the torque applied. We set the mass and length of the pendulum to be $1$ kg and $1$ meter, respectively. After discretization using Euler's method by setting the sampling time to be 0.05 seconds, the dynamics is given by
\begin{equation} 
\begin{aligned}
\begin{bmatrix} 
x_{1,k+1}\\
x_{2,k+1}\\
\end{bmatrix}=
\begin{bmatrix} 
x_{1,k}+0.05x_{2,k}\\
-0.49\sin(x_{1,k})+x_{2,k}+0.05u_{k}\\
\end{bmatrix}+w_{k},
\end{aligned}
\end{equation}
where we set $w_{k}$ to be independent Gaussian random vectors with zero mean and covariance matrix given by $0.25 I_{2}$. The linearized system matrices around the origin are given by 
\begin{equation}
\begin{aligned}
A=
\begin{bmatrix}
1&0.05&\\
-0.49&1\\
\end{bmatrix},
B=
\begin{bmatrix}
0\\
0.05\\
\end{bmatrix},
o=
\begin{bmatrix}
0\\
0\\
\end{bmatrix}.
\end{aligned}
\end{equation}
We plot the system identification error using Algorithm \ref{algo1} and Algorithm \ref{algo2} versus the number of experiments $N$ for $q=1.2,0.9, 0.6$ in Fig.~\ref{pend_multi}. As can be observed, a smaller $q$ could lead to a larger error when $N$ is small, due to a smaller signal to noise ratio. However, a smaller $q$ can eventually result in a smaller error when $N$ is large enough due to less bias, which confirms our observations in Theorem \ref{thm1}.

In the single trajectory setup, we plot the error using i.i.d zero mean Gaussian inputs with different variance $\sigma_{u}^2$, where $N$ here represents the number of samples used in the single trajectory. A common heuristic is that one should apply small inputs to learn a good linear approximation around a given reference point, i.e., the variance $\sigma_{u}^2$ should be small. However, as shown in Fig.~\ref{pend_sing}, the error plateaus at around 0.6, even for small variance inputs. The key reason is that the random input and process noise can always drive the system states to undesired regions and excite the higher order terms, unless the input is carefully designed. In fact, the paper \cite{sarker2023accurate} shows that random inputs in the single trajectory setup could result in inconsistent estimation under certain conditions even for Lipschitz nonlinearity. 
\begin{figure}[ht]
\minipage[t]{0.43\textwidth}
\includegraphics[width=\linewidth]{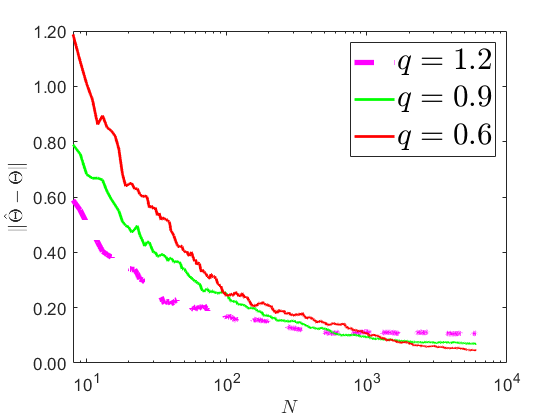}
\caption{System identification error using Algorithms \ref{algo1}-\ref{algo2} with different $q$, mild nonlinearity}
\label{pend_multi}
\endminipage \hfill
\end{figure}

\begin{figure}[ht] 
\minipage[t]{0.43\textwidth}
\includegraphics[width=\linewidth]{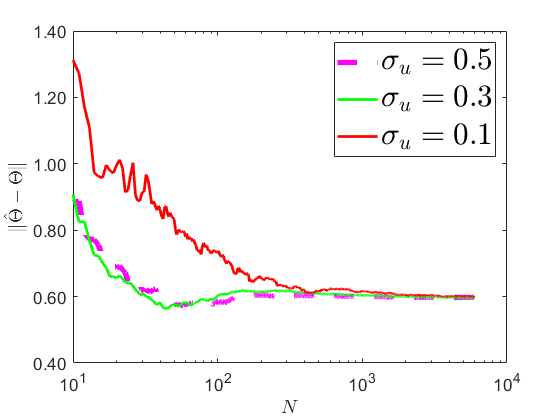}
\caption{System identification error using a single trajectory with different $\sigma_{u}$, mild nonlinearity}
\label{pend_sing}
\endminipage \hfill
\end{figure}

\subsection{System with strong nonlinearity}
In the second example, we investigate the performance of the system identification algorithms under strong nonlinearity (where the assumption of lipschitzness used in \cite{sarker2023accurate} no longer holds). The virtual model we use here is given by

\begin{equation} 
\begin{aligned}
\begin{bmatrix} 
x_{1,k+1}\\
x_{2,k+1}\\
\end{bmatrix}
&=\begin{bmatrix} 
0.9 & 0.5 \\
0 & 0.8 \\
\end{bmatrix}
\begin{bmatrix} 
x_{1,k}\\
x_{2,k}\\
\end{bmatrix}
+\begin{bmatrix} 
1\\
1\\
\end{bmatrix}u_{k}\\
&+\begin{bmatrix} 
x_{1,k}^3+x_{2,k}^5\\
x_{1,k}x_{2,k}\\
\end{bmatrix}
+\begin{bmatrix} 
1\\
1\\
\end{bmatrix}
+w_{k},
\end{aligned}
\end{equation}
where we again set $w_{k}$ to be independent Gaussian random vectors with zero mean and covariance matrix given by $0.25 I_{2}$. 

Again, we plot the system identification error using Algorithm \ref{algo1} and Algorithm \ref{algo2} versus the number of experiments $N$ for $q=0.6, 0.4,0.2$. As can be observed, similar trends still hold, i.e., a smaller $q$ results in a larger error when $N$ is small, but is beneficial  in the long run, even for system with relatively strong nonlinearity. 

In contrast, in the single trajectory setup, we applied i.i.d zero mean Gaussian inputs with variance $\sigma_{u}^2=0.1^2, 0.01^2, 0.001^2$. However, all of them fail to converge and result in numerical issues since the noise and non-zero offset drive the system states to regions where nonlinearity dominates. 
\section{Conclusion and future work} \label{sec: conclusion}
In this paper, we proposed system identification algorithms to learn the linearized model of a system. Unlike existing works, we assume that the underlying dynamics could be nonlinear. We presented a finite sample error bound of the algorithms, which shows that one can learn the linearized dynamics with arbitrarily small error given sufficiently many samples, and demonstrates a trade-off between the error due to noise and the error due to nonlinearity. Our bound further characterizes the benefits of using regularization. As shown in \cite{ahmadi2021safely}, initializing states at different locations might come at different costs. Consequently, future work would focus on studying how to optimize the data collection procedure under constraints on initial state/input.

\begin{figure}[ht] 
\minipage[t]{0.43\textwidth}
\includegraphics[width=\linewidth]{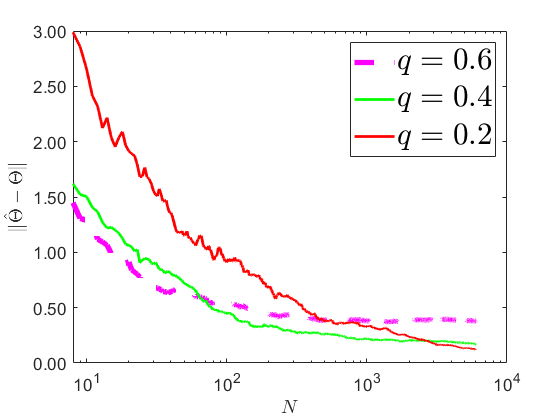}
\caption{System identification error using Algorithms \ref{algo1}-\ref{algo2} with different $q$, strong nonlinearity}
\label{strong_multi}
\endminipage \hfill
\end{figure}



\bibliographystyle{IEEEtran}
\bibliography{main}
\end{document}